\newtheorem{thm}{Theorem}[section]
\newtheorem{lema}[thm]{Lemma}
\newtheorem{prop}[thm]{Proposition}
\theoremstyle{definition}
\newtheorem{obs}[thm]{Remark}
\theoremstyle{plain}
\newtheorem*{thm*}{Teorema}
\newtheorem*{cor*}{Corolario}
\newtheorem*{lema*}{Lema}
\newtheorem*{prop*}{Proposición}
\theoremstyle{definition}
\newtheorem*{defi*}{Definición}
\newtheorem*{Ej*}{Ejemplo}
\newtheorem*{obs*}{Nota}
\newtheorem*{Obs*}{Notas}
\newcommand{\cC}{{\mathcal C}}
\numberwithin{equation}{section}
\def\vol{\textrm{vol}}
\def\diver{\textrm{div}}
\def\dis{\displaystyle}
\def\A{\mathcal{C}^\infty(M)}
\def\C{\mathcal{C}}
\def\idiota{{\text{\larger[1]{\emph\i}}}}
\def\bemol{{\boldsymbol{\flat}}}
\def\be{\begin{equation}
}
\def\ee{\end{equation}
}
\renewenvironment{abstract}
 {\small
  \begin{center}
  \bfseries \abstractname\vspace{-.5em}\vspace{0pt}
  \end{center}
  \vskip .3cm
  \list{}{
    \setlength{\leftmargin}{.5cm}%
    \setlength{\rightmargin}{\leftmargin}%
  }%
  \item\relax}
 {\endlist}
\begin{document}

\title[Flux-variational formulation of relativistic perfect fluids]
 {Flux-variational formulation of relativistic perfect fluids}

\author{R. J.  Alonso-Blanco and J.  Mu\~{n}oz-D{\'\i}az}

\address{Departamento de Matem\'{a}ticas, Universidad de Salamanca, Plaza de la Merced 1-4, E-37008 Salamanca,  Spain.}
\email{ricardo@usal.es, clint@usal.es}

\maketitle

\bigskip

\begin{abstract}
We give a variational formulation of perfect fluids on a general pseudoriemannian manifold by variating tangent fields according the flux produced by them. In this approach no constraints are needed. As a result, Euler and continuity equations are obtained quite directly.
\end{abstract}
\bigskip

\section{Introduction}

Different variational approaches to perfect fluids are available for a long time to the present (see, for example, \cite{Taub,Hawking,Schutz,Brown,Ootsuka1}).  Most of them use constraints or potentials. Indeed, in \cite{Schutz} is proven that constraints are required for (ordinary) variational formulations of perfect fluids. In this paper we offer a formulation that eliminates the need for such resources. It is based on the application of what we have called \emph{flux variations} (see below).

A steady fluid on a manifold is, firstly, described by a vector field: the field of its velocities; that includes non steady fluids if we add a time coordinate by considering an space-time manifold. So, the object we will deal with is  a vector field, this is to say, a section $u$ of the tangent bundle $TM\to M$, where $M$ is a smooth manifold. In order to consider variations of $u$, we take infinitesimal contact transformations preserving the geometrical structure of $TM$. If, in addition, $M$ is orientable and endowed with a pseudoriemannian metric $g$, we get the corresponding volume element $\vol$. In some way, the most important aspect of $u$ is the infinitesimal flux that it produces: $\idiota_{\dis u}\vol$, which is the inner contraction of $u$ with $\vol$. This is why is natural to consider the variation that a vector field  produces on the flux $\idiota_{\dis u}\vol$, instead of the variation caused on $u$ itself. On the other hand, the function of ``length'' $\rho(u_x):=\sqrt{|g(u_x,u_x)|}$ will be interpreted as the ``density'' of the fluid moved along $u$ (see, for instance, \cite{DiracRelatividad}, p. 50). Finally, we will consider a Lagrangian density $\Phi(\rho)\vol$, where $\Phi$ is an arbitrary smooth function, and will take the infinitesimal variations of its restriction $\Phi(\rho)\vol|_u$, according to how the flux of $u$ varies. The vanishing of these variations produces the equations of the fluid dynamics (Euler equation) along with the conservation of mass law (continuity equation).

\subsection{Notation}
Let us fix an $n$-dimensional smooth manifold $M$. In local computations we will take coordinates $x^1,\dots,x^n$. Each smooth function $f\in\mathcal{C}^\infty(M)$ defines on the tangent bundle $TM$ the function $\dot f$ by the rule
$$\dot f(v_x):=df(v_x)=v_x(f),\quad\forall  v_x\in T_xM.$$
The set $x^j,\dot x^j,$ gives us a local chart on $TM$ as usual and symbols $\partial_j$ and $\partial_{\dot x^j}$ will denote the respective partial derivatives; in particular, $\dot f=\dot d(f)$ where, by definition, $\dot d:=\dot x^j\partial_j$ (see \cite{MecanicaMunoz} for a intrinsic analysis of this operator). If $u$ is a tangent vector field and $\omega$ a differential form, $\idiota_{\dis u}\omega$ will denote the inner contraction of $\omega$ with $u$. Moreover, when $M$ is endowed with a pseudoriemannian metric tensor $g$ (symmetric 2-covariant tensor without kernel), with each vector field $u$, we associate the differential 1-form $u^\bemol:=\idiota_{\dis u}g$.

\section{Infinitesimal variations on the bundle of fluxes}\label{s:prolongacion}

\subsection{Prolongations of vector fields to the tangent bundle}

Let $M$ be an $n$-di\-men\-sio\-nal manifold.  Its tangent bundle bundle $TM$ is endowed with the so called \emph{contact system} $\Omega$ which is comprised by the differential 1-forms vanishing on the prolongation of parameterized curves: given a differentiable curve $t\to\gamma(t)\in M$, we can consider its prolongation $\widehat\gamma\colon t\mapsto (d_t\gamma)(d/dt)_t\in T_{\gamma(t)}M$. In local coordinates, if $\gamma$ is described by $x^j=\gamma^j(t)$, then $\widehat\gamma$ is
$$(x^j=\gamma^j(t),\dot x^j=(d\gamma^j(t)/dt))=\frac {d\gamma^j(t)}{dt}(\partial_j)_{\gamma(t)},$$
and so, the contact system $\Omega$ is generated by the 1-forms
$$\dot x^\ell dx^j-\dot x^jdx^\ell,\quad j,\ell=1,\dots,n.$$

On $TM$ there is defined the infinitesimal generator of the homotheties (along fibres) which is the vector field $W$ that have the local expresion
$$W=\dot x^j\partial_{\dot x^j}.$$

For a given tangent field $\xi$ on $M$, a little computation shows that we can ``prolongate'' it to the tangent bundle in several ways by imposing the invariance of the contact system:
\begin{prop}[see (\cite{RM})]\label{prolongations}
Let $W$ be the infinitesimal generator of the homotheties along fibres of $TM$. The set of vector fields on $TM$ that project onto $\xi=\xi^j\partial_j$, and preserve the contact system by Lie derivative, is
$$\xi^j\partial_j+\dot \xi^j\partial_{\dot x^j}+\varphi W$$
where $\varphi$ is an arbitrary function on $TM$.
\end{prop}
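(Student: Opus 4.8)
The plan is to characterize all vector fields on $TM$ that (i) project onto the given field $\xi = \xi^j \partial_j$ on $M$, and (ii) preserve the contact system $\Omega$ under Lie derivative. A general vector field on $TM$ projecting onto $\xi$ has the local form $X = \xi^j \partial_j + \eta^j \partial_{\dot x^j}$, where the $\eta^j$ are arbitrary functions on $TM$ (the condition of projecting onto $\xi$ fixes the $\partial_j$-components to be the pullbacks of $\xi^j$ but leaves the fibre components free). So the real content is to determine which choices of the $\eta^j$ force $\mathcal{L}_X \Omega \subseteq \Omega$, i.e. preserve the ideal (or module) generated by the contact 1-forms $\theta^{j\ell} := \dot x^\ell dx^j - \dot x^j dx^\ell$.

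First I would compute $\mathcal{L}_X \theta^{j\ell}$ directly using Cartan's formula $\mathcal{L}_X = \idiota_{\dis X} d + d\, \idiota_{\dis X}$, or equivalently expand $\mathcal{L}_X(\dot x^\ell dx^j - \dot x^j dx^\ell)$ via the Leibniz rule. The key facts are that $\mathcal{L}_X dx^j = d(X x^j) = d\xi^j = \partial_k \xi^j\, dx^k$ (since $x^j$ depends only on the base and $X x^j = \xi^j$), and that $\mathcal{L}_X \dot x^\ell = X \dot x^\ell = \eta^\ell$. Carrying out the differentiation produces a combination of terms of the form $\eta^\ell dx^j$, $\eta^j dx^\ell$, and $\dot x^\ell\, \partial_k\xi^j dx^k$, $\dot x^j\, \partial_k\xi^\ell dx^k$. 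The requirement that this 1-form lie back inside the module generated by the $\theta^{j\ell}$ imposes algebraic-differential conditions on the $\eta^j$.

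Then I would solve those conditions. The natural ansatz, suggested by the prolongation formula, is to write $\eta^j = \dot\xi^j + \varphi\, \dot x^j$, where $\dot\xi^j = \dot d(\xi^j) = \dot x^k \partial_k \xi^j$ is the canonical (complete) lift component and $\varphi W = \varphi\, \dot x^j \partial_{\dot x^j}$ is the homothety correction. Substituting $\eta^j = \dot\xi^j$ should make $\mathcal{L}_X \theta^{j\ell}$ land exactly in $\Omega$ (this is the statement that the complete lift preserves the contact system), and substituting $\eta^j = \varphi\, \dot x^j$ should show that $\mathcal{L}_{\varphi W}\theta^{j\ell}$ is again a combination of the $\theta^{j\ell}$ — indeed $W$ itself is a symmetry of the contact system since $\theta^{j\ell}$ is homogeneous of degree one in the $\dot x$'s, so scaling the fibre rescales each $\theta^{j\ell}$ and its Lie derivative stays in the module. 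This establishes that every field of the stated form is admissible.

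The main obstacle, and the part deserving the most care, is the \emph{converse}: showing these are the \emph{only} admissible prolongations, i.e. that the conditions on $\eta^j$ force exactly $\eta^j - \dot\xi^j$ to be proportional to $\dot x^j$ with a common factor $\varphi$. After subtracting off the complete lift I would set $\zeta^j := \eta^j - \dot\xi^j$ and reduce to the homogeneous problem: find all $\zeta^j$ such that $\zeta^j \partial_{\dot x^j}$ (a vertical field) preserves $\Omega$. The invariance condition then becomes a linear system relating the $\zeta^j$, and the content is that the module generated by the $\theta^{j\ell}$ is annihilated precisely when the matrix $(\partial \zeta / \partial \dot x)$ or the relevant combination collapses to a scalar multiple of the identity acting on the Euler/radial direction — concretely, that $\dot x^\ell \zeta^j - \dot x^j \zeta^\ell$ must be expressible through the $\theta^{j\ell}$, forcing $\zeta^j = \varphi\, \dot x^j$. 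Establishing this rank/proportionality claim rigorously — i.e. that no vertical perturbation transverse to $W$ survives — is where the ``little computation'' hides its real work, and I would verify it by testing the Lie-derivative expression against a basis of the $\theta^{j\ell}$ and extracting the coefficients of the independent $dx^k$ and $\dot x^m dx^k$ terms.
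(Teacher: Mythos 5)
The paper itself gives no proof of this proposition --- it is quoted from \cite{RM}, with only the remark that ``a little computation shows'' it --- so the comparison is against that intended computation, and your proposal is a correct way to carry it out. Your outline (general projectable field $\xi^j\partial_j+\eta^j\partial_{\dot x^j}$, verify sufficiency for the complete lift and for $\varphi W$, then subtract the complete lift and treat the purely vertical remainder) is exactly the natural route. The only inaccuracy is one of emphasis: the necessity step that you flag as ``where the real work hides'' is in fact immediate. For a vertical field $Z=\zeta^j\partial_{\dot x^j}$ one has $\mathcal L_Z\,dx^j=d(Zx^j)=0$, so
\begin{equation*}
\mathcal L_Z\left(\dot x^\ell dx^j-\dot x^j dx^\ell\right)=\zeta^\ell dx^j-\zeta^j dx^\ell ,
\end{equation*}
which is horizontal; and (away from the zero section) a horizontal $1$-form $a_k\,dx^k$ belongs to the contact system if and only if it annihilates the generic velocity $\dot d=\dot x^k\partial_k$, i.e. $a_k\dot x^k=0$. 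Applied to the form above this gives $\zeta^\ell\dot x^j-\zeta^j\dot x^\ell=0$ for all $j,\ell$, which says precisely that $(\zeta^1,\dots,\zeta^n)$ is pointwise proportional to $(\dot x^1,\dots,\dot x^n)$, i.e. $\zeta^j=\varphi\,\dot x^j$ for a function $\varphi$ (smooth off the zero section: take $\varphi=\zeta^m/\dot x^m$ wherever $\dot x^m\neq0$). No rank analysis or coefficient extraction against a basis of the $\theta^{j\ell}$ is needed; the single contraction with $\dot d$ does it. With that observation inserted, your proof is complete and correct.
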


 Therefore, it is necessary to impose some additional condition in order to determine a prolongation. For example, we can consider the \emph{class of time} comprised by the horizontal forms, $\alpha=\alpha^j(x,\dot x)dx^j$, such that if we define $\dot d:=\dot x^j\partial_j$ (the ``generic velocity''), then $\idiota_{\dis\dot d}\,\alpha=\alpha^j(x,\dot x)\dot x^j=1$ (the 1-forms $df/\dot f$, $f\in\A$, generate the class of time module contact forms; see \cite{MecanicaMunoz} for details). If now we impose on the possible prolongations of a field $\xi$ the condition of preserving the class of time we get the usual prolongation
$$\xi=\xi^j\partial_j\mapsto \textrm{pr}^{(1)}\xi:=\xi^j\partial_j+\dot \xi^j\partial_{\dot x^j}.$$
This is called \emph{covariant prolongation} of $\xi$, or \emph{vertical lift} of $\xi$ in \cite{Ootsuka1}. However, as we will see (Subsection \ref{subfluxprolongacion}), another prolongations are also natural and useful depending on the context.
\medskip

On the other hand, each vector tangent to $M$ at a point $x$, say $e_x$, defines a tangent vector $(e_x)^{\dis v_x}\in T_{v_x}(T_xM)$, for any $v_x\in T_xM$ due to the  linear structure of such a fibre by using the directional derivative:
 \be\label{e:representantevertical}
 (e_x)^{\dis v_x}f:=\left.\frac{d}{dt}\right|_{t=0}f(v_x+te_x)
 \ee
 for each function $f\in\C^\infty(TM)$.

 In local coordinates, when $e_x=e_x^j(\partial_j)_x$ we have $(e_x)^{\dis v_x}=e_x^j(\partial_{\dot x^j})_{v_x}$.

If, instead, we have a vector field $e$ on $M$ we get a vertical vector field $E$ on $TM$, that we call the \emph{vertical representative} of $e$ (see \cite{MecanicaMunoz}); in local coordinates,
\begin{equation}\label{representantevertical}
e=e^j\partial_j\mapsto E=e^j\partial_{\dot x^j}.
\end{equation}

If now we have two vector fields $e$ and $v$ on $M$, we can get a tangent field $e^{\dis v}$ defined on the submanifold $v(M):=\{v_x\,|\,x\in M\}\subset TM$ by the rule:
$$e^{\dis v}\colon v_x\mapsto (e_x)^{\dis v_x}\in T_{v_x}T_xM.$$
In other words, $e^{\dis v}$ is the restriction of the vertical representative $E$ to the submanifold $v(M)$.

\subsection{Lifting of tangent fields to the bundle of fluxes}
When we have a vector field $u$  and a volume form $\vol$, the integral of the  $(n-1)$-form $\idiota_{\dis u}\vol$ on the hypersurface enclosing an open region, measures the flux of $u$ through that hypersurface. This is why we call  \emph{bundle of fluxes} the bundle $\Lambda^{n-1}M\to M$, where $\Lambda^{n-1}M$ denotes the space of differential  $(n-1)$-forms on $M$. A coordinate system  $x^j$ on $M$ and the choice of a (local) volume form, say $\vol$, induces a local chart $x^j,w^j,$ on $\Lambda^{n-1}M$ by the rule
 $$ \sigma_x=w^j(\sigma_x) \idiota_{{\dis \partial_j}}\vol,\quad\text{for each $\sigma_x\in \Lambda^{n-1}M$}.$$

 In the same way in which Liouville form is defined on the cotangent bundle, it is defined on $\Lambda^{n-1}M$ the tautological $(n-1)$-form $\Upsilon$: given an $(n-1)$-form $\sigma_x$ at the point $x$, and tangent vectors $D^k_{\sigma_x}\in T_{\sigma_x}\Lambda^{n-1}M$, $k=1,\dots,n-1$, we set
 $$\Upsilon_{\sigma_x}(D^1_{\sigma_x},\dots,D^{n-1}_{\sigma_x}):=\sigma_x(\pi_*D^1_{\sigma_x},\dots,\pi_*D^{n-1}_{\sigma_x}),$$
 where $\pi_*$ is the tangent map corresponding to the projection $\pi\colon\Lambda^{n-1}M\to M$.
 In local coordinates, it is described by
$$\Upsilon=w^j\, \idiota_{{\dis \partial_j}}\vol=\idiota_{{\dis w^j\partial j}}\vol,$$
where we assume that $n$-form $\textrm{vol}$ of $M$ is pull-backed to $\Lambda^{n-1}M$ by means of $\pi$.
A computation shows the following
\begin{lema}\label{subidaalosflujos}
Each vector field $\xi$ on $M$ determines another one, $\widetilde \xi$, on $\Lambda^{n-1}M$ which is the unique one which projects onto $\xi$ and preserves the tautological form $\Upsilon$; that is to say,
$$\mathcal L_{\dis\widetilde \xi}\Upsilon=0$$
($\mathcal L_{\dis\widetilde\xi}$ denotes de Lie derivative operator).
\end{lema}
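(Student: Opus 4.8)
The plan is to compute directly in the local chart $(x^j,w^j)$ introduced above and reduce the statement to a purely algebraic identity. Any field on $\Lambda^{n-1}M$ projecting onto $\xi=\xi^j\partial_j$ must be of the form $\widetilde\xi=\xi^j\partial_j+a^k\partial_{w^k}$ with the fibre components $a^k$ smooth functions on $\Lambda^{n-1}M$ still to be found; so the whole content of the lemma is that the single condition $\mathcal{L}_{\widetilde\xi}\Upsilon=0$ pins the $a^k$ down, and does so uniquely. The tools throughout are Cartan's formula $\mathcal{L}_{\widetilde\xi}\Upsilon=\idiota_{\widetilde\xi}d\Upsilon+d\,\idiota_{\widetilde\xi}\Upsilon$, the remark that a vertical vector annihilates every form pulled back from $M$, and the pointwise linear independence of the $(n-1)$-forms $\idiota_{\partial_k}\vol$, $k=1,\dots,n$.

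Writing $\Upsilon=w^j\,\idiota_{\partial_j}\vol$, one gets $\idiota_{\widetilde\xi}\Upsilon=w^j\,\idiota_\xi\idiota_{\partial_j}\vol$ (only the horizontal part of $\widetilde\xi$ contributes), while the contraction $\idiota_{\widetilde\xi}d\Upsilon$ produces the term $a^k\,\idiota_{\partial_k}\vol$ together with a cross-term $-dw^j\wedge\idiota_\xi\idiota_{\partial_j}\vol$. The exterior derivative $d\,\idiota_{\widetilde\xi}\Upsilon$ generates exactly the opposite cross-term $+dw^j\wedge\idiota_\xi\idiota_{\partial_j}\vol$. The decisive point is that these two $dw^j$ cross-terms cancel, so that $\mathcal{L}_{\widetilde\xi}\Upsilon$ is a purely horizontal $(n-1)$-form containing no $dw$'s and no derivatives of the $a^k$: after recombining via $\mathcal{L}_\xi=\idiota_\xi d+d\,\idiota_\xi$ on $M$ it simplifies to $\mathcal{L}_{\widetilde\xi}\Upsilon=a^k\,\idiota_{\partial_k}\vol+w^j\,\mathcal{L}_\xi(\idiota_{\partial_j}\vol)$.

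To finish I would expand $\mathcal{L}_\xi(\idiota_{\partial_j}\vol)=(\diver\xi)\,\idiota_{\partial_j}\vol-(\partial_j\xi^k)\,\idiota_{\partial_k}\vol$, using $[\xi,\partial_j]=-(\partial_j\xi^k)\partial_k$ and $\mathcal{L}_\xi\vol=(\diver\xi)\vol$ (the divergence taken with respect to $\vol$, so that no special choice of $\vol$ is needed). Collecting everything in the frame $\{\idiota_{\partial_k}\vol\}$ gives $\mathcal{L}_{\widetilde\xi}\Upsilon=\sum_k\big(a^k-w^j\partial_j\xi^k+(\diver\xi)\,w^k\big)\idiota_{\partial_k}\vol$, and linear independence turns the equation $\mathcal{L}_{\widetilde\xi}\Upsilon=0$ into the $n$ scalar identities $a^k=w^j\partial_j\xi^k-(\diver\xi)\,w^k$. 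These exhibit the $a^k$ as explicit smooth functions, so existence and uniqueness of $\widetilde\xi$ follow simultaneously.

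I expect the cancellation of the cross-terms to be the delicate step, since it is exactly what converts an a priori first-order condition on the $a^k$ into a mere algebraic one; it is the precise analogue of the computation that singles out the canonical lift to the cotangent bundle as the unique field preserving the Liouville form. For this reason it is worth recording the invariant reading that makes the result transparent: the flow $\tau_s$ of $\xi$ acts on $\Lambda^{n-1}M$ by $\Phi_s(\sigma_x):=(\tau_{-s})^*\sigma_x\in\Lambda^{n-1}_{\tau_s(x)}M$, and the defining property of $\Upsilon$ gives $\Phi_s^*\Upsilon=\Upsilon$; differentiating at $s=0$ produces a lift $\widetilde\xi$ of $\xi$ with $\mathcal{L}_{\widetilde\xi}\Upsilon=0$, while uniqueness is the special case $\xi=0$ of the computation above, where a vertical field $b^k\partial_{w^k}$ satisfies $\mathcal{L}_{b^k\partial_{w^k}}\Upsilon=b^k\,\idiota_{\partial_k}\vol$, which vanishes only when all $b^k=0$.
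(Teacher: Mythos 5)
Your proof is correct, and its computational core coincides with the paper's: both arguments work in the chart $(x^j,w^j)$, reduce the condition $\mathcal L_{\dis\widetilde\xi}\Upsilon=0$ to the algebraic identities $a^k=w^j\partial_j\xi^k-(\diver\xi)\,w^k$, and read existence and uniqueness off the pointwise linear independence of the forms $\idiota_{\dis\partial_k}\vol$. The difference lies in how the Lie derivative is expanded. The paper writes $\Upsilon=\idiota_{\dis w^j\partial_j}\vol$ and applies the Leibniz rule $\mathcal L_{\dis\widetilde\xi}\,\idiota_{\dis w^j\partial_j}\vol=\idiota_{\dis[\widetilde\xi,\,w^j\partial_j]}\vol+\idiota_{\dis w^j\partial_j}\mathcal L_{\dis\widetilde\xi}\vol$, so the terms involving derivatives of the unknown fibre components die immediately against $\idiota_{\dis\partial_{w^j}}\vol=0$; you instead use Cartan's formula and verify that the $dw^j$ cross-terms cancel --- the same phenomenon seen from the other side, and your cancellation claim does hold. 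What your write-up genuinely adds is the invariant argument at the end: the flow of $\xi$ lifts canonically to $\Lambda^{n-1}M$ by $\Phi_s(\sigma_x)=(\tau_{-s})^*\sigma_x$, the tautological property gives $\Phi_s^*\Upsilon=\Upsilon$, and differentiating at $s=0$ produces a lift with vanishing Lie derivative; uniqueness then follows from the observation that a vertical field $b^k\partial_{w^k}$ satisfies $\mathcal L_{\dis b^k\partial_{w^k}}\Upsilon=b^k\idiota_{\dis\partial_k}\vol$, which vanishes only for $b^k=0$. This coordinate-free route, absent from the paper, isolates the structural reason the lemma is true (naturality of $\Upsilon$ under diffeomorphisms, exactly as for the Liouville form on the cotangent bundle) from the bookkeeping; the paper's computation is shorter but leaves that analogy implicit.
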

\begin{proof}
Let $\xi=\xi^j\partial_j$ in local coordinates; so, $\widetilde \xi=\xi^j\partial_j+B^j\partial_{w^j}$ for some $B^j$ to be determined. We have,
\begin{align*}
\mathcal L_{\dis\widetilde \xi}\,\Upsilon &=\mathcal L_{\dis\widetilde \xi}\,(\idiota_{\dis w^j\partial_j}\vol) \\
                 &=\idiota_{\dis \mathcal L_{\dis \widetilde{\xi}}\,(w^j\partial_j)}\vol+\idiota_{\dis w^j\partial_j}\mathcal L_{\dis \widetilde{\xi}}\,\vol    \\
                 &=\idiota_{\dis \{B^j\partial_j-w^h\partial_h(\xi^j)\partial_j-\partial_h(B^j)\partial_{w^j}\}}\vol+\idiota_{\dis w^j\partial_j}\textrm{div}(\xi)\vol\\
                 &=\idiota_{\dis \{(B^j-w^h\partial_h(\xi^j)+\diver(\xi)\,w^j)\partial_j\}}\vol
\end{align*}
where we have used that $\idiota_{\partial_{w^j}}\vol=0$ and we denote by $\diver(\xi)$  the divergence of $\xi$ with respect to $\vol$: $\mathcal L_{\dis \xi}\vol=\diver(\xi)\,\vol$. Thus, $\mathcal L_{\dis \widetilde \xi}\Upsilon$ vanishes if and only if $B^j=w^h\partial_h(\xi^j)-(\diver(\xi))w^j$. As a consequence, each $B^j$ is completely determined and the uniquely defined $\widetilde \xi$
is, in local coordinates,
$$\widetilde \xi=\xi^j\partial_j+\left(w^h\partial_h(\xi^j)-\textrm{div}(\xi)w^j\right)\partial_{w^j}.$$
\end{proof}

\subsection{Transfer to the tangent bundle}\label{subfluxprolongacion}
Let us assume that $M$ is oriented by a global volume form $\vol$. For instance, if $M$ is orientable and endowed with a pseudoriemannian metric $g$ we can take the volume form $\vol$ associated with $g$; locally,  $\vol:=\sqrt{}{\phantom{.}}dx^1\wedge\cdots\wedge dx^n$, where $\sqrt{}$ denotes the squared root of $|\textrm{det}g|$.

In this case, we get the isomorphism
$$\phi\colon TM\to\Lambda^{n-1}M,\quad u_x\mapsto \phi(u_x):=\idiota_{\dis u_x}\textrm{vol},$$
which in the above introduced coordinates can be written simply as $w^j=\dot x^j$. In this way, $\widetilde \xi$ defined on the bundle of fluxes is transferred to the tangent bundle as the vector field (we keep notation)
$$\widetilde \xi=\xi^j\partial_j+\left(\dot x^h\partial_h(\xi^j)-\diver(\xi)\dot x^j\right)\partial_{\dot x^j}=\xi^j\partial_j+\dot \xi^j\partial_{\dot x^j}-\textrm{div}(\xi)\, W,$$
and then we get the prolongation
\be\label{fluxprolongacion}
\xi\mapsto \widetilde \xi=pr^{(1)}\xi-\diver(\xi)W,
\ee
which belongs to the set of contact preserving prolongations of $\xi$ according to Proposition \ref{prolongations}. We will call $\widetilde \xi$ the \emph{flux prolongation} of $\xi$ with respect to the pseudometric $g$. This prolongation is the most appropriate if what we want is to look at the flux as the main property of a vector field.

\begin{obs}
A prolongation of this type is used in \cite{Anton} to formulate fluids in the context of a constrained variational approach.
\end{obs}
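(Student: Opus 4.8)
The final statement is a bibliographic observation rather than a mathematical assertion, so ``proving'' it amounts to verifying that the prolongation employed in \cite{Anton} coincides with the flux prolongation $\widetilde\xi=\textrm{pr}^{(1)}\xi-\diver(\xi)\,W$ just introduced in \eqref{fluxprolongacion}. The plan is therefore comparative rather than deductive: I would locate in \cite{Anton} the lift that sends a vector field $\xi$ on $M$ to a field on the tangent bundle used to generate variations, and confront it with the expression above.

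First I would extract the local-coordinate form of the lift used in \cite{Anton}. The decisive feature to check is the presence of the divergence-correction term $-\diver(\xi)\,W=-\diver(\xi)\,\dot x^j\partial_{\dot x^j}$ added to the ordinary (covariant) prolongation $\textrm{pr}^{(1)}\xi=\xi^j\partial_j+\dot\xi^j\partial_{\dot x^j}$. Equivalently, one may verify the intrinsic characterisation: the lift should be the unique contact-preserving prolongation of $\xi$ (Proposition \ref{prolongations}) whose transfer to $\Lambda^{n-1}M$ via $\phi$ preserves the tautological form $\Upsilon$ (Lemma \ref{subidaalosflujos}). Matching either the coordinate expression or this intrinsic property establishes that the two prolongations are of the same type.

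Second, since \cite{Anton} works within a \emph{constrained} variational framework, I would identify how its constraint encodes the flux datum $\idiota_{\dis u}\vol$. The expectation is that the constraint there plays exactly the role that, in the present approach, is built directly into the geometry of the bundle of fluxes $\Lambda^{n-1}M$: what is imposed externally as a constraint in \cite{Anton} corresponds here to the preservation of $\Upsilon$. Confirming this correspondence both substantiates the remark and sharpens the contrast the paper is drawing, namely that the flux prolongation dispenses with the constraints needed elsewhere.

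The main obstacle is not computational but referential: the truth of the remark hinges entirely on material contained in \cite{Anton} and not reproduced in the excerpt, so the verification cannot be completed from the present text alone. Any genuine check requires first aligning the two papers' possibly divergent conventions (choice of volume form, sign of the divergence, horizontal-versus-vertical conventions for lifts) before the term-by-term comparison can be declared conclusive.
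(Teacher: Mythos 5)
The paper offers no proof of this remark at all---it is a purely bibliographic observation, stated without argument, as is customary. Your diagnosis is exactly right: the claim can only be verified against \cite{Anton} itself by matching the divergence-corrected lift $\textrm{pr}^{(1)}\xi-\diver(\xi)\,W$ (or its intrinsic characterisation via $\Upsilon$) with the prolongation used there, and your plan for doing so is sound and consistent with what the paper implicitly relies on.
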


\bigskip

\section{Flux variational problems}\label{s:problemas variacionales}
We keep the previous notation and hypotheses.
We will consider variational problems defined on the tangent bundle with respect to variations defined according the flux prolongations. For a given
$ L\in\cC^\infty(TM)$ let us define the functional:
$$u\overset{I}\longmapsto I(u):=\int_{u(M)}\lambda,$$
where $u\colon M\to TM$ is a tangent vector field and $\lambda= L\,\vol$  is a lagrangian density on $TM$.

Now, we apply the variations of $I$ at $u$ associated with compact supported vector fields $\xi$ by means of $\widetilde \xi$ (see (\ref{fluxprolongacion})):
\begin{equation}\label{variation}
\delta_{\dis \xi}I(u):=\int_{u(M)}\mathcal L_{\dis \widetilde \xi}\lambda=\int_M u^*\mathcal L_{\dis \widetilde \xi}\lambda.
\end{equation}

The pull-back $u^*\mathcal L_{\dis \widetilde \xi}\lambda$ depends only on the values of $\widetilde\xi$ when restricted to the section $u(M)\subseteq TM$; these values are   
$$
\widetilde \xi_{u_x}=\left(\xi^j\partial_j+\dot \xi^j\partial_{\dot x^j}-\textrm{div}(\xi)\, W\right)_{u_x}
  =\xi^j(x)\partial_{j,u_x}+u_x(\xi^j)\partial_{\dot x^j,u_x}-\textrm{div}_x(\xi)\, u_x^{\dis u_x} 
$$
where $u_x^{\dis u_x}$ is defined as in (\ref{e:representantevertical}).  In addition, taking into account that, for each point $x\in M$,
  $$(u_*\xi)_x=(\xi^j\partial_j)_{u_x}+\xi_x(u^j)(\partial_{\dot x^j})_{u_x},$$
  we arrive to
  \be\label{xiprimera}
  \widetilde \xi_{u_x}=u_*\xi_x-\left([\xi,u]_x+\textrm{div}_x(\xi)\,u_x\right)^{\dis u_x}
  \ee
where, for each tangent vector $v_x\in T_xM$, $v_x^{\dis u_x}$ is defined as in (\ref{e:representantevertical}).

Let us put by definition
$$\delta_{\dis \xi}u:=[\xi,u]+\textrm{div}(\xi)\,u,$$
in such a way that (\ref{xiprimera}) is written as
 $$
  \widetilde \xi_{u_x}=u_*\xi_x-(\delta_\xi u)_x^{\dis u_x}
 $$

 In this notation we states the following
 \begin{prop}
 For each vector field $\xi$ with compact support, the (flux) variation of $I$ at the section $u$ is
 $$\delta_{\dis \xi}I(u)=-\int_M(\delta_{\dis \xi}u)^{\dis u}(L)\,\vol,$$
 where $\delta_{\dis \xi}u:=[\xi,u]+\textrm{div}(\xi)\,u,$ and, by its very definition (\ref{e:representantevertical}), the derivative in above integrand is
\be\label{evariation1}
(\delta_{\dis \xi}u)^{\dis u}(L)=\left(\left.\frac{d}{dt}\right|_{t=0}L(u+t\delta_{\dis \xi}u)\right).
\ee
 \end{prop}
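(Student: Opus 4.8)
The plan is to start from the defining formula (\ref{variation}), $\delta_{\dis\xi}I(u)=\int_M u^*\mathcal{L}_{\dis\widetilde\xi}\lambda$, and to exploit the splitting (\ref{xiprimera}) of the flux prolongation along the section, namely $\widetilde\xi_{u_x}=u_*\xi_x-(\delta_{\dis\xi}u)_x^{\dis u_x}$. Since the integrand $u^*\mathcal{L}_{\dis\widetilde\xi}\lambda$ depends only on the restriction of $\widetilde\xi$ to $u(M)$, as already observed in the text, and both $\mathcal{L}$ and $u^*$ are linear in the field, I would compute the two contributions separately, replacing $\widetilde\xi$ in each case by a convenient global field on $TM$ whose restriction to $u(M)$ reproduces the corresponding summand, and then add the results.

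First I would treat the tangential part $u_*\xi$. Taking any extension of it that is tangent to $u(M)$, the field $\xi$ becomes $u$-related to that extension, so $u^*\mathcal{L}_{\dis u_*\xi}\lambda=\mathcal{L}_{\dis\xi}\,u^*\lambda$. Because $\vol$ is pulled back from $M$ we have $u^*\vol=\vol$, hence $u^*\lambda=L(u)\,\vol$, a top-degree form on $M$; consequently $\mathcal{L}_{\dis\xi}\,u^*\lambda=d\,\idiota_{\dis\xi}(L(u)\,\vol)$ is exact. Since $\xi$ has compact support, Stokes' theorem forces this whole term to integrate to zero, so the tangential part contributes nothing to $\delta_{\dis\xi}I(u)$.

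Next I would treat the vertical part $-(\delta_{\dis\xi}u)^{\dis u}$, choosing as extension the vertical representative $E$ of $\delta_{\dis\xi}u$ given by (\ref{representantevertical}), whose restriction to $u(M)$ is precisely $(\delta_{\dis\xi}u)^{\dis u}$ by (\ref{e:representantevertical}). As $E$ is vertical while $\vol$ is the pullback of the volume form of $M$, one has $\idiota_{\dis E}\vol=0$ and $d\,\vol=0$, whence $\mathcal{L}_{\dis E}\vol=0$ and therefore $\mathcal{L}_{\dis E}\lambda=(EL)\,\vol$. Pulling back and evaluating along $u$ turns $EL$ into $(\delta_{\dis\xi}u)^{\dis u}(L)$, which is exactly the derivative (\ref{evariation1}); keeping the minus sign from the splitting yields $u^*\mathcal{L}_{\dis -(\delta_{\dis\xi}u)^{\dis u}}\lambda=-(\delta_{\dis\xi}u)^{\dis u}(L)\,\vol$. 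Adding the two pieces, the tangential one drops out and the vertical one survives, giving $\delta_{\dis\xi}I(u)=-\int_M(\delta_{\dis\xi}u)^{\dis u}(L)\,\vol$.

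I expect the main obstacle to be the rigorous justification of the splitting itself: a Lie derivative needs a field defined on a neighborhood and not merely along $u(M)$, so one must lean on the fact that $u^*\mathcal{L}_{\dis X}\lambda$ depends only on $X|_{u(M)}$ in order to legitimately replace $\widetilde\xi$ by the sum of a tangent extension of $u_*\xi$ and the vertical field $-E$. Checking that this replacement has the correct restriction to $u(M)$, and that the tangential term is genuinely exact so that compact support kills it, is the delicate point; by contrast, the vertical computation reducing $\mathcal{L}_{\dis E}\lambda$ to $(EL)\,\vol$ is essentially a direct application of Cartan's formula.
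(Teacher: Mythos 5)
Your proposal is correct and follows essentially the same route as the paper's proof: both split $\widetilde\xi$ along $u(M)$ into $u_*\xi$ minus the vertical field $(\delta_{\dis\xi}u)^{\dis u}$, extend the two pieces to fields on $TM$ (the second one vertical), discard the tangential contribution as an exact form integrated against a compactly supported $\xi$, and identify the vertical contribution with $(\delta_{\dis\xi}u)^{\dis u}(L)\,\vol$. The only cosmetic differences are that you handle the tangential piece via naturality of the Lie derivative under $u$-related fields and the vertical piece via the Leibniz rule with $\mathcal L_{\dis E}\vol=0$, whereas the paper runs Cartan's formula directly on both terms, and that you fix the concrete vertical representative $E$ where the paper allows an arbitrary vertical extension.
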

 \begin{proof}
 Let $\gamma_1$ (respectively, $\gamma_2$) a tangent  field on $TM$ that coincides on $u(M)$ with the field $u_x\mapsto u_*\xi_x$ (respectively, $u_x\mapsto(\delta_\xi u)_x^{\dis u_x}$), which is defined only for the points $u_x\in u(M)$. Further, we can assume that $\gamma_2$ is vertical with respect to $TM\to M$. Then,
 \be\label{xisegunda}
 u^*\mathcal L_{\dis \widetilde \xi}\lambda=u^*\mathcal L_{\dis\gamma_1}\lambda-u^*\mathcal L_{\dis\gamma_2}\lambda,
 \ee
 because $\widetilde\xi$ and $\gamma_1-\gamma_2$ are equal on $u(M)$.
 Now, by the Cartan formula for the Lie derivative and the identity $u^*d\lambda=du^*\lambda=0$ (since an $(n+1)$-differential form on the $n$-dimensional manifold $M$ must vanish), it holds
 \be\label{xitercera}
 u^*\mathcal L_{\dis\gamma_1}\lambda=d\idiota_\xi u^*\lambda.
 \ee
 Similar arguments, but now taking into account the verticality of $\gamma_2$ and the horizontality of $\lambda$, imply that
 \be\label{xicuarta}
 u^*\mathcal L_{\dis\gamma_2}\lambda=u^*\idiota_{\dis\gamma_2}d\lambda=u^*\gamma_2(L)\,\vol=(\delta_\xi u)^u(L)\,\vol.
 \ee
 From (\ref{xisegunda}), (\ref{xitercera}), (\ref{xicuarta}),
 $$u^*\mathcal L_{\dis \widetilde \xi}\lambda=d\idiota_\xi u^*\lambda-(\delta_\xi u)^u(L)\,\vol.$$
 If, besides that, $\xi$ is compact supported, the term $d\idiota_\xi u^*\lambda$ disappears when is integrated. Thus the proof is finished.
 \end{proof}

\subsection{Variation of mass density}
In this section we assume that $M$ is orientable and endowed with a pseudoriemannian metric $g$ and $\vol$ is associated with $g$; so that $\vol=\sqrt{}{\phantom{.}}dx^1\wedge\cdots\wedge dx^n$ locally.

Physical considerations (see \cite{DiracRelatividad}, p. 50)  lead to treat the ``density'' function in the tangent bundle:
 $$\rho\colon TM\to\mathbb{R},\qquad \rho(u_x)=\sqrt{|g(u_x,u_x)|};$$
 that is to say, $\rho$ 
 is the length function.
  If $ L=\Phi(\rho)$ (for a certain function $\Phi)$,  then
\be
(\delta_{\dis \xi}u)^{\dis u}\Phi(\rho)=\pm\Phi'(\rho_u)\,\frac{g(u,\delta_{\dis \xi}u)}{\sqrt{|g(u,u)|}}=\pm\Phi'(\rho_u)\,g(v,\delta_{\dis \xi}u),
\ee\label{evariation2}
where, $\rho_u=\sqrt{|g(u,u)|}$, $v:=u/\rho_u$, the sign $\pm$ is that of $g(u,u)$ and we have applied (\ref{evariation1}) and
\begin{align*}
(\delta_{\dis \xi}u)^{\dis u}\rho &=\frac 1{2\sqrt{\rho_u^2}}(\delta_{\dis \xi}u)^{\dis u}\rho^2=
     \frac 1{2\sqrt{\rho_u^2}}\left(\left.\frac{d}{dt}\right|_{t=0}\rho^2(u+t\delta_{\dis \xi}u)\right)\\[5pt]
   &=\frac {1}{2\rho_u}\left(\pm\left.\frac{d}{dt}\right|_{t=0}g(u+t\delta_{\dis \xi}u,u+t\delta_{\dis \xi}u)\right)=
    \pm\frac{2\,g(u,\delta_{\dis \xi}u)}{2\rho_u}=\pm g(v,\delta_{\dis \xi}u).
\end{align*}

As a consequence,
\begin{equation}\label{Calculovariacion}
\delta_{\dis \xi}I(u)=\mp\int_M\Phi'(\rho_u )\,g(v,\delta_{\dis \xi}u)\,\vol.
  \end{equation}
\medskip

Next, we need the following
\begin{lema}\label{identidad2}
For each couple of vector fields $p,q$ on $M$, if we define $p^\bemol:={\emph \idiota}_{\dis p}g$,
it holds
$$p^\bemol\wedge {\emph\idiota}_{\dis q}\vol=p^\bemol(q)\,\vol=g(p,q)\,\vol.$$
\end{lema}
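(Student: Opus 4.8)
The plan is to reduce both equalities to elementary identities. The second equality is immediate: by the definition $p^\bemol=\idiota_{\dis p}g$ we have $p^\bemol(q)=(\idiota_{\dis p}g)(q)=g(p,q)$. Thus all the content of the lemma lies in the first equality $p^\bemol\wedge\idiota_{\dis q}\vol=p^\bemol(q)\,\vol$, which I would establish intrinsically, treating $p^\bemol$ simply as an arbitrary $1$-form.

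The key observation is that $\idiota_{\dis q}$ is an antiderivation of degree $-1$ on the exterior algebra, so that for the $1$-form $p^\bemol$ and the $n$-form $\vol$ it satisfies
$$\idiota_{\dis q}(p^\bemol\wedge\vol)=(\idiota_{\dis q}p^\bemol)\,\vol-p^\bemol\wedge\idiota_{\dis q}\vol.$$
I would then invoke the same dimensional fact already used in the proof of the preceding proposition (where $u^*d\lambda=0$ because an $(n+1)$-form on the $n$-dimensional manifold $M$ must vanish): here $p^\bemol\wedge\vol$ is an $(n+1)$-form on $M$, hence identically zero, so the left-hand side above vanishes. Rearranging and using that $\idiota_{\dis q}p^\bemol=p^\bemol(q)$ for a $1$-form gives
$$p^\bemol\wedge\idiota_{\dis q}\vol=(\idiota_{\dis q}p^\bemol)\,\vol=p^\bemol(q)\,\vol,$$
and combining this with the first paragraph completes the argument.

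I do not expect a genuine obstacle here; the only point requiring care is the sign in the antiderivation rule $\idiota_{\dis q}(\alpha\wedge\beta)=(\idiota_{\dis q}\alpha)\wedge\beta+(-1)^{\deg\alpha}\alpha\wedge\idiota_{\dis q}\beta$, which for the $1$-form $\alpha=p^\bemol$ produces exactly the minus sign used above. As an alternative route, or a cross-check, one could compute in the coordinates $x^j$ with $\vol=\sqrt{}\,dx^1\wedge\cdots\wedge dx^n$: writing $p^\bemol=g_{ij}p^i\,dx^j$ and expanding $\idiota_{\dis q}\vol$, every wedge term vanishes except the ones reassembling $g_{ij}p^iq^j\,\vol=g(p,q)\,\vol$. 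This is the more computational path, but it confirms the same identity.
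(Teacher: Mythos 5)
Your proof is correct and is essentially identical to the paper's: both arguments contract the identically vanishing $(n+1)$-form $p^\bemol\wedge\vol$ with $q$, use the antiderivation rule for the inner product, and rearrange to obtain the identity. The coordinate computation you mention as a cross-check is extra, but the core argument matches the paper's proof exactly.
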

\begin{proof}
By taking the inner product of $q$ by the null $(n+1)$-form $p^\bemol\wedge\vol$ we get
$$0=\idiota_{\dis q}(p^\bemol\wedge\vol)=(\idiota_{\dis q}p^\bemol)\vol-p^\bemol\wedge\idiota_{\dis q}\vol.$$
Since  $\idiota_{\dis q}p^\bemol=p^\bemol(q)=g(p,q)$, the proof is finished.
\end{proof}
\bigskip

By the above lema and the identity $$\idiota_{\dis \delta_{\dis \xi}u}\vol=\idiota_{\dis [\xi,u]+(\textrm{div}\,\xi) u}\vol=\mathcal L_{\dis \xi}(\idiota_{\dis u}\vol),$$
we arrive to
$$g(v,\delta_{\dis \xi}u)\,\vol=v^\bemol\wedge \idiota_{\dis \delta_{\dis \xi}u}\,\vol=v^\bemol\wedge \mathcal L_{\dis \xi}(\idiota_{\dis u}\vol).$$
With this expression, the integrand of (\ref{Calculovariacion}) becomes
\begin{equation}\label{desarrollo1}
\Phi'(\rho_u )\,v^\bemol\wedge \mathcal L_{\dis \xi}(\idiota_{\dis u}\vol)=
 \mathcal L_{\dis \xi}\left(\Phi'(\rho_u )\,v^\bemol\wedge (\idiota_{\dis u}\vol)\right)-\mathcal L_{\dis \xi}(\Phi'(\rho_u )\,v^\bemol)\wedge \idiota_{\dis u}\vol.
 \end{equation}
The first summand in the right hand side of (\ref{desarrollo1}) is an exact differential form because is the Lie derivative of a differential $n$-form on the $n$-dimensional manifold $M$; so, we can discard it if we integrate when the field $\xi$ has compact support. Let us denote with the symbol $\equiv$ the equality up to terms of the above type (those that can be discarded in the integration):
$$\Phi'(\rho_u )\,v^\bemol\wedge \mathcal L_{\dis \xi}(\idiota_{\dis u}\vol)\equiv-\mathcal L_{\dis \xi}(\Phi'(\rho_u )\,v^\bemol)\wedge \idiota_{\dis u}\vol$$
that, by the Cartan formula, is (minus) the sum of
$$\alpha :=d(\idiota_{\dis \xi}\Phi'(\rho_u )\,v^\bemol)\wedge \idiota_{\dis u}\vol\quad\text{and}\quad
\beta :=\idiota_{\dis \xi}d(\Phi'(\rho_u )\,v^\bemol)\wedge \idiota_{\dis u}\vol,
$$
but
$$\alpha=d\left(\idiota_{\dis \xi}\Phi'(\rho_u )v^\bemol\wedge \idiota_{\dis u}\vol\right)-(\idiota_{\dis \xi}\Phi'(\rho_u )v^\bemol)d\idiota_{\dis u}\vol\equiv
-(\idiota_{\dis \xi}\Phi'(\rho_u )v^\bemol)\textrm{div}(u)\vol.$$
In addition, for any 1-form $\sigma$ and vector field $q$ we have (similarly to Lema \ref{identidad2})
$$0=\idiota_q(\sigma\wedge\vol)=(\idiota_q\sigma)\vol-\sigma\wedge\idiota_q\vol$$ so that
$$\beta=\idiota_{\dis u}\idiota_{\dis \xi}d(\Phi'(\rho_u )v^\bemol)\vol,$$
and then,
\begin{equation}\label{desarrollo2}
\Phi'(\rho_u )\,v^\bemol\wedge \mathcal L_{\dis \xi}(\idiota_{\dis u}\vol)\equiv \idiota_{\dis \xi}\left\{\Phi'(\rho_u )\,\textrm{div}(u)v^\bemol+\idiota_{\dis u}d(\Phi'(\rho_u )v^\bemol)\right\}\,\vol.
\end{equation}

We derive the
\begin{prop}\label{propvariacion}
In the above notation
$$\delta_{\dis \xi}I(u)=0,\quad\text{for all $\xi$ with compact support,}$$
 if and only if
\begin{equation}\label{seccionescriticas}
\Phi'(\rho_u )\,\textrm{div}(u)v^\bemol+\emph\idiota_{\dis u}d(\Phi'(\rho_u )v^\bemol)=0
\end{equation}
where $u=\rho_uv$.
\end{prop}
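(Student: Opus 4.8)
The plan is to reduce the statement to the fundamental lemma of the calculus of variations, since the analytic work has already been carried out in (\ref{desarrollo2}). First I would package $\delta_{\dis\xi}I(u)$ in terms of the single $1$-form
$$\Theta:=\Phi'(\rho_u)\,\diver(u)\,v^\bemol+\idiota_{\dis u}d(\Phi'(\rho_u)\,v^\bemol)$$
that appears on the left-hand side of (\ref{seccionescriticas}). Starting from (\ref{Calculovariacion}) and using the identities established just before the proposition, the integrand of the integral in (\ref{Calculovariacion}) equals $\Phi'(\rho_u)\,v^\bemol\wedge\mathcal L_{\dis\xi}(\idiota_{\dis u}\vol)$, which by (\ref{desarrollo2}) is congruent, modulo exact $n$-forms, to $(\idiota_{\dis\xi}\Theta)\,\vol=\Theta(\xi)\,\vol$. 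Because $\xi$ has compact support, the discarded exact terms integrate to zero, so that
$$\delta_{\dis\xi}I(u)=\mp\int_M\Theta(\xi)\,\vol\qquad\text{for every compactly supported }\xi.$$

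With this formula in hand both implications follow. The reverse one is immediate: if (\ref{seccionescriticas}) holds then $\Theta=0$, the integrand vanishes identically, and hence $\delta_{\dis\xi}I(u)=0$ for every $\xi$. For the direct implication I would argue by contradiction with a standard bump-function localization. Assuming $\Theta_{x_0}\neq0$ at some point $x_0$, I would pick $\xi_0\in T_{x_0}M$ with $\Theta_{x_0}(\xi_0)>0$, extend it to a vector field $\eta$ with constant components in a coordinate chart around $x_0$, and set $\xi=\chi\,\eta$ for a nonnegative bump function $\chi$ supported in a small neighborhood $U$ of $x_0$ with $\chi(x_0)=1$. Shrinking $U$ so that $\Theta(\eta)>0$ throughout (possible by continuity), one gets $\Theta(\xi)=\chi\,\Theta(\eta)\geq0$ with strict positivity near $x_0$; since $\vol$ is a genuine volume form, the integral $\int_M\Theta(\xi)\,\vol$ is then strictly positive, contradicting $\delta_{\dis\xi}I(u)=0$. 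Therefore $\Theta$ vanishes identically, which is precisely (\ref{seccionescriticas}).

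The heaviest lifting is already done in passing from (\ref{Calculovariacion}) to (\ref{desarrollo2}), so the only genuinely new step is the localization, and that is where I expect the only real care to be needed: one must guarantee that a test field realizing a definite sign of $\Theta(\xi)$ actually exists. This is ensured purely by the freedom to choose $\xi$ among all compactly supported vector fields together with the positivity of $\vol$. Notably, the nondegeneracy of $g$ plays no role at this stage, because the pairing $\Theta(\xi)$ is the bare duality between $1$-forms and vector fields, the metric having already been absorbed into the coefficients of $\Theta$.
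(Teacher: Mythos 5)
Your proposal is correct and follows essentially the same route as the paper: both pass from (\ref{Calculovariacion}) through (\ref{desarrollo2}) to the formula $\delta_{\dis\xi}I(u)=\mp\int_M\idiota_{\dis\xi}\Theta\,\vol$ and then conclude by the arbitrariness of the compactly supported $\xi$. The only difference is that the paper leaves the final step implicit (``we get the statement''), whereas you spell out the standard bump-function localization that justifies it --- a worthwhile explicitation, but not a different proof.
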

\begin{proof}
Acording to the result established in Equation (\ref{desarrollo2}) it holds that
$$\delta_{\dis \xi}I(u)=\mp\int_M\idiota_{\dis \xi}\left\{\Phi'(\rho_u )\,\textrm{div}(u)v^\bemol+\idiota_{\dis u}d(\Phi'(\rho_u )v^\bemol)\right\}\,\vol.$$
If we assume that $\delta_{\dis \xi}I(u)$ vanishes for arbitrary $\xi$ of compact support we get the statement.
\end{proof}
\bigskip

\subsection{The fluid equations}

Let us see subsequently, the simplest non trivial case and then, the general one.
\medskip

\noindent {\bf Case $\Phi(\rho)=\rho$:} Now $\Phi'=1$ and thus (\ref{seccionescriticas}) becomes
\begin{equation}\label{critica1}
 \textrm{div}(u)v^\bemol+\idiota_{\dis u}dv^\bemol=0.
 \end{equation}
By contracting with $u$ and taking into account that $\idiota_{\dis u}\idiota_{\dis u}dv^\bemol=0$ and $\idiota_{\dis u}v^\bemol=g(u,v)=\pm\rho_u$ (according to the sign of $g(u,u)$); it follows that
$$\textrm{div}(u)\rho_u=0\quad \Rightarrow\quad \textrm{div}(u)=0;$$
that is to say,  $u$ is conservative.
By substituting into (\ref{critica1}), we see that $\idiota_{\dis u}dv^\bemol=0$. But $u=\rho_uv$, and then, that is equivalent to
\begin{equation}\label{identidad1}
\idiota_{\dis v}dv^\bemol=0
\end{equation}
(at least where $\rho_u\ne 0$).

Next, we will need to apply the following
\begin{lema}\label{identidad}
For each vector field $w$, it holds the identity
$$\emph\idiota_{\dis w}dw^\bemol=(w^\nabla w)^\bemol-\frac 12\,d(g(w,w))$$
(where $\nabla$ denotes the Levi-Civita connection associated with $g$)
\end{lema}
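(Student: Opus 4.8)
The plan is to reduce the whole identity to the two defining properties of the Levi-Civita connection: absence of torsion and compatibility with $g$. First I would re-express the exterior derivative of the $1$-form $w^\bemol=\idiota_{\dis w}g$ through $\nabla$. Since $\nabla$ is torsion-free, for any $1$-form $\alpha$ one has $d\alpha(X,Y)=(\nabla_X\alpha)(Y)-(\nabla_Y\alpha)(X)$; taking $\alpha=w^\bemol$ gives
$$dw^\bemol(X,Y)=(\nabla_X w^\bemol)(Y)-(\nabla_Y w^\bemol)(X).$$
Metric compatibility then allows me to interchange the flat and the covariant derivative, $\nabla_X(w^\bemol)=(\nabla_X w)^\bemol$, because $(\nabla_X w^\bemol)(Y)=X(g(w,Y))-g(w,\nabla_X Y)=g(\nabla_X w,Y)$. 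Hence $dw^\bemol(X,Y)=g(\nabla_X w,Y)-g(\nabla_Y w,X)$.

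Next I would contract with $w$, i.e. set $X=w$ and read off the resulting $1$-form on an arbitrary test field $Y$:
$$(\idiota_{\dis w}dw^\bemol)(Y)=g(\nabla_w w,Y)-g(\nabla_Y w,w).$$
The first term is exactly $(w^\nabla w)^\bemol(Y)$, writing $w^\nabla w:=\nabla_w w$. For the second, metric compatibility gives $g(\nabla_Y w,w)=\frac{1}{2}\,Y(g(w,w))=\frac{1}{2}\,d(g(w,w))(Y)$, which is precisely the remaining summand. Assembling the two contributions produces the claimed identity.

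I expect no serious obstacle: the entire content is the bookkeeping of the two connection axioms. The only points needing a little care are the factor $\frac{1}{2}$ and the sign produced by differentiating $g(w,w)$, and invoking the torsion-free formula for $d$ in the correct antisymmetric form.

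As an alternative more in line with the Cartan-calculus style used above, I could instead start from the magic formula $\idiota_{\dis w}dw^\bemol=\mathcal L_{\dis w}w^\bemol-d(\idiota_{\dis w}w^\bemol)=\mathcal L_{\dis w}w^\bemol-d(g(w,w))$, using $\idiota_{\dis w}w^\bemol=g(w,w)$, and then evaluate $\mathcal L_{\dis w}w^\bemol$ on a test field $Y$ via $(\mathcal L_{\dis w}w^\bemol)(Y)=w(g(w,Y))-g(w,[w,Y])$ together with the torsion-free relation $[w,Y]=\nabla_w Y-\nabla_Y w$; this route yields $\mathcal L_{\dis w}w^\bemol=(w^\nabla w)^\bemol+\frac{1}{2}\,d(g(w,w))$, and substituting back gives the same conclusion. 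Either way the proof closes in a few lines.
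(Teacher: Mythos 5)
Your main argument is correct and is essentially the paper's own proof: both evaluate $\idiota_{\dis w}dw^\bemol$ on a test vector field and use exactly the same ingredients --- the bracket formula for the exterior derivative of a $1$-form, torsion-freeness, metric compatibility, and the identity $g(\nabla_Y w,w)=\tfrac 12\,Y(g(w,w))$ --- the only cosmetic difference being that you covariantize $dw^\bemol$ into $d\alpha(X,Y)=(\nabla_X\alpha)(Y)-(\nabla_Y\alpha)(X)$ before contracting with $w$, whereas the paper contracts first and then introduces $\nabla$ via the connection axioms. Your alternative route through Cartan's magic formula is also correct, but it too is a repackaging of the same computation rather than a genuinely different argument.
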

\begin{proof}
(\cite{MecanicaMunoz}, Remark 1.6, \cite{RM}, Section 0). Let $z$ be another vector field. Then
\begin{align*}
(\idiota_wdw^\bemol)(z) &=dw^\bemol(w,z)\overset{(1)}=w(w^\bemol(z))-z(w^\bemol(w))-w^\bemol([w,z])\\
                &\overset{(2)}=w(g(w,z))-z(g(w,w))-g(w,[w,z])\\
                &\overset{(3)}=g(w^\nabla w,z)+g(w,w^\nabla z)-z(g(w,w))-g(w,[w,z])
\end{align*}
where we have used: (1) the definition of exterior differential, (2) the definition of $w^\bemol$ and (3) one of the characterizing properties of the Levi-Civita connection. Moreover, this connection is symmetric so that, $[w,z]=w^\nabla z-z^\nabla w$, and then
$$(\idiota_wdw^\bemol)(z)=g(w^\nabla w,z)+g(w,z^\nabla w)-z(g(w,w));$$
If, in addition, we take into account that
$$g(w,z^\nabla w)=\frac 12 z(g(w,w))=d(g(w,w)/2)(z)$$
and substitute, we arrive to
$$(\idiota_wdw^\bemol)(z)=g(w^\nabla w,z)-d(g(w,w)/2)(z)=\left((w^\nabla w)^\bemol-d(g(w,w)/2)\right)(z),$$
as required.
\end{proof}
From Lema \ref{identidad} and Equation (\ref{identidad1}), by taking into account that $g(v,v)=\pm 1$, we arrive to
\begin{equation}\label{geodesico}
v^\nabla v=0.
\end{equation}

Putting all together,
\begin{prop}
The critical sections for the functional
$$I(u)=\int_u\rho\,\vol$$
under flux variations
are the conservative fields $u$ such that $v=u/\rho_u$ (the unitary field associated with $u$) is a  geodesic field.
\end{prop}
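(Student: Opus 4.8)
The plan is to assemble the facts already established for the case $\Phi(\rho)=\rho$ into a clean biconditional, since the analytic work has all been carried out in the preceding subsection. By Proposition \ref{propvariacion} specialized to $\Phi'=1$, the critical sections are precisely those $u$ satisfying Equation (\ref{critica1}), namely $\diver(u)\,v^\bemol+\idiota_{\dis u}dv^\bemol=0$. I would prove the proposition by showing that this single equation is equivalent to the conjunction of the two stated conditions, $\diver(u)=0$ and $v^\nabla v=0$.

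For the forward direction I would first contract (\ref{critica1}) with $u$. Since $dv^\bemol$ is a $2$-form, $\idiota_{\dis u}\idiota_{\dis u}dv^\bemol=0$ by antisymmetry, while $\idiota_{\dis u}v^\bemol=g(u,v)=\pm\rho_u$; this leaves $\diver(u)\,\rho_u=0$, whence $\diver(u)=0$ wherever $\rho_u\neq 0$, so $u$ is conservative. Substituting back into (\ref{critica1}) gives $\idiota_{\dis u}dv^\bemol=0$, and because $u=\rho_u v$ with contraction $\mathcal{C}^\infty(M)$-linear in the vector argument, this coincides with $\idiota_{\dis v}dv^\bemol=0$, which is Equation (\ref{identidad1}). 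Finally I would invoke Lemma \ref{identidad} with $w=v$: the identity reads $\idiota_{\dis v}dv^\bemol=(v^\nabla v)^\bemol-\frac{1}{2}d(g(v,v))$, and since $v=u/\rho_u$ has unit length, $g(v,v)=\pm 1$ is constant and its differential vanishes. Hence $(v^\nabla v)^\bemol=0$, and the nondegeneracy of $g$ (injectivity of the flat map) yields the geodesic equation $v^\nabla v=0$ of (\ref{geodesico}).

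For the converse I would run the same identities backwards. If $\diver(u)=0$ and $v^\nabla v=0$, then Lemma \ref{identidad} together with $g(v,v)=\pm1$ gives $\idiota_{\dis v}dv^\bemol=0$, hence $\idiota_{\dis u}dv^\bemol=\rho_u\,\idiota_{\dis v}dv^\bemol=0$, while the remaining term $\diver(u)\,v^\bemol$ also vanishes; thus (\ref{critica1}) holds and $u$ is critical. This closes the equivalence, and with it the proposition.

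The only real subtlety, rather than a genuine obstacle, is the role of the locus where $\rho_u\neq 0$: the unit field $v$ and the passage from $\diver(u)\,\rho_u=0$ to $\diver(u)=0$ both require nonvanishing density, so the conclusion is naturally read on the region where the fluid is actually present, consistent with the interpretation of $\rho$ as a positive mass density. I would flag this caveat explicitly and observe that, once (\ref{critica1}) is in hand, every remaining step — the contraction annihilating $\idiota_{\dis u}\idiota_{\dis u}dv^\bemol$, the rescaling $u=\rho_u v$, and the application of Lemma \ref{identidad} — is purely algebraic, so no further estimates or analytic arguments are needed.
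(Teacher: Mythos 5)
Your proposal is correct and follows essentially the same route as the paper: specializing Proposition \ref{propvariacion} to $\Phi'=1$, contracting (\ref{critica1}) with $u$ to extract $\diver(u)=0$, substituting back to obtain $\idiota_{\dis v}dv^\bemol=0$, and applying Lemma \ref{identidad} with $g(v,v)=\pm1$ to conclude $v^\nabla v=0$. Your explicit converse direction and the flagged caveat about the locus $\rho_u\neq 0$ are welcome clarifications of points the paper leaves implicit or parenthetical, but they do not constitute a different argument.
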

\bigskip

\noindent {\bf General case $\Phi(\rho)$:}
Exactly as in the above case, by inner contraction of (\ref{seccionescriticas}) with $u$ it is derived that $u$ is conservative: $\textrm{div}(u)=0$ (at least in the open set where $\Phi'(\rho_u )\rho_u\ne 0$). Then, by substituting into (\ref{seccionescriticas}),
$$\idiota_{\dis u}d(\Phi'(\rho_u )v^\bemol)=0.$$
being $u=\rho_u\, v$ with $\sqrt{|g(v,v)|}=1$, we have
\begin{equation}\label{critica2}
0=\idiota_vd(\Phi'(\rho_u)v^\bemol)=\idiota_v(d\Phi'\wedge v^\bemol+\Phi'dv^\bemol)=v(\Phi')v^\bemol\mp d\Phi'+\Phi'(v^\nabla v)^\bemol,
\end{equation}
where we have applied that $\idiota_vv^\bemol=g(v,v)=\pm1$ and, then, that by Lemma \ref{identidad}, $i_vdv^\bemol=(v^\nabla v)^\bemol$.

Now, by passing to the dual using de metric we have $d\Phi'\mapsto\textrm{grad}\,\Phi'$, $v^\bemol\mapsto v$, etc.,  in such a way that  Equation (\ref{critica2}) becomes
\begin{equation}\label{critica3}
0=v(\Phi')v\mp\textrm{grad}\,\Phi'+\Phi'\,v^\nabla v
\end{equation}

In order to put Equation (\ref{critica3}) in a more usual form we take the following definitions. Firstly, let $\epsilon$ the function such that $\Phi(\rho)=\rho(1+\epsilon(\rho))$.
Since $\Phi'=1+\epsilon+\rho\epsilon'=\frac{\Phi+\rho^2\epsilon'}\rho$ it results, by denoting $P:=\rho^2\epsilon'$, that
$\Phi'=(\Phi+P)/\rho$ and that
$$\Phi''=\frac{(\Phi'+P')\rho-(\Phi+P)}{\rho^2}=\frac{(\Phi+P)+P'\rho-(\Phi+P)}{\rho^2}=\frac{P'}\rho.$$

In this way, $v(\Phi')=\Phi'' v(\rho)=(P'/\rho)v(\rho)=v(P)/\rho$ and analogously $\textrm{grad}(\Phi')=\textrm{grad}(P)/\rho$. Therefore,

\begin{prop}
The critical sections for the functional
$$I(u)=\int_u\Phi(\rho)\,\vol$$
under flux variations
are the \emph{conservative fields} $u$ ($\diver\, u=0$, \emph{continuity equation}) such that
\smallskip

\begin{equation*}
0=(\Phi+P)v^\nabla v+v(P)v\mp\textrm{grad}\,P,
\end{equation*}
\smallskip

\noindent where $\Phi=\rho_u\,(1+\epsilon(\rho_u))$, $P:=\rho_u^2\epsilon'(\rho_u)$, $u=\rho_u v$, $g(v,v)=\pm1$.
\end{prop}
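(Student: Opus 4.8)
The plan is to start from the two conditions already extracted from criticality in the general case and to recast the second of them in terms of a pressure function. By Proposition \ref{propvariacion}, criticality of $I$ under all compactly supported flux variations is equivalent to Equation (\ref{seccionescriticas}); contracting that equation with $u$ and using $\idiota_{\dis u}\idiota_{\dis u}d(\Phi'(\rho_u)v^\bemol)=0$ together with $\idiota_{\dis u}v^\bemol=\pm\rho_u$ forces $\Phi'(\rho_u)\diver(u)\,\rho_u=0$, hence $\diver(u)=0$ wherever $\Phi'(\rho_u)\rho_u\neq0$. This is the continuity equation. Feeding $\diver(u)=0$ back into (\ref{seccionescriticas}) and dividing by $\rho_u$ leaves $\idiota_{\dis v}d(\Phi'(\rho_u)v^\bemol)=0$, which after expanding the exterior derivative and applying Lemma \ref{identidad} produces Equation (\ref{critica3}), namely $0=v(\Phi')\,v\mp\textrm{grad}\,\Phi'+\Phi'\,v^\nabla v$.

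The remaining task is purely the change of variables from $\Phi$ to the pair $(\Phi,P)$. I would introduce $\epsilon$ by $\Phi(\rho)=\rho(1+\epsilon(\rho))$ and set $P:=\rho^2\epsilon'$; differentiating gives $\Phi'=(\Phi+P)/\rho$ and $\Phi''=P'/\rho$. Since $\rho$ here means the restriction $\rho_u$ along the section and $v$ is tangent to $M$, the chain rule yields $v(\Phi')=\Phi''\,v(\rho)=v(P)/\rho$, and likewise, passing through the metric identification, $\textrm{grad}(\Phi')=\textrm{grad}(P)/\rho$. Substituting these three expressions for $\Phi'$, $v(\Phi')$ and $\textrm{grad}(\Phi')$ into (\ref{critica3}) and multiplying through by the nonvanishing factor $\rho_u$ clears every denominator and gives exactly
$$0=(\Phi+P)\,v^\nabla v+v(P)\,v\mp\textrm{grad}\,P,$$
which is the asserted Euler equation.

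I do not expect a genuine obstacle: the geometric content was already spent in deriving (\ref{critica3}) and in Lemma \ref{identidad}, so what is left is a one-line substitution. The only points demanding care are bookkeeping ones. First, the open-set caveat: both the reduction $\diver(u)=0$ and the division by $\rho_u$ are legitimate only where $\Phi'(\rho_u)\rho_u\neq0$, so the statement is really about that region. Second, the signs: the $\pm$ governed by the sign of $g(u,u)$ and the $\mp$ in front of $\textrm{grad}$ must be propagated consistently through the dualization $d\Phi'\mapsto\textrm{grad}\,\Phi'$ and $v^\bemol\mapsto v$, and one should verify that multiplying by $\rho_u>0$ leaves them untouched. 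Granting these, the proof reduces to the substitution into (\ref{critica3}) described above.
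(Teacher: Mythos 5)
Your proposal is correct and follows essentially the same route as the paper: contraction of (\ref{seccionescriticas}) with $u$ to obtain $\diver(u)=0$ on the set where $\Phi'(\rho_u)\rho_u\neq 0$, substitution back and division by $\rho_u$, expansion of $d(\Phi'(\rho_u)v^\bemol)$ together with Lemma \ref{identidad} (using that $g(v,v)=\pm1$ kills the $d(g(v,v))$ term) to reach (\ref{critica3}), and finally the same change of variables $\Phi'=(\Phi+P)/\rho$, $\Phi''=P'/\rho$ followed by multiplication by $\rho_u$. The bookkeeping caveats you flag (the open-set restriction and the sign propagation through dualization) are exactly the ones the paper itself records.
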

\bigskip

In the particular case of a couple $(M,g)$ being a Lorentzian manifold, these are the continuity and the \emph{Euler equations for a relativistic perfect fluid}.

\subsection*{Conclusions}
Based in physical reasons, we have considered the natural variational problem consisting of functionals of the length of vector fields. Without the necessity of impose any kind of constraints nor potentials, we get the well known equations of a relativistic perfect fluid consisting of the Euler equations joint with the conservation of mass law (indeed, more than that is obtained because we get analogous equations on any pseudoriemannian oriented manifold). The only price to pay is to take variations according the flux produced by vector fields. On the other hand, we think this is a quite natural and justified procedure.


\end{document}